\newcommand{\be}{\begin{equation}}
\newcommand{\ee}{\end{equation}}
\newcommand{\ba}{\begin{eqnarray}}
\newcommand{\ea}{\end{eqnarray}}
\newtheorem{thm}{Theorem}[section]
\newtheorem{prop}[thm]{Proposition}
\newcommand{\n}{\label}
\newcommand{\p}{\partial}
\begin{document}
\centerline{\bf  A lower bound of the Trautman-Bondi energy}

\bigskip

-\centerline{J. Tafel}

\noindent
\centerline{Institute of Theoretical Physics, University of Warsaw,}
\centerline{Ho\.za 69, 00-681 Warsaw, Poland, email: tafel@fuw.edu.pl}

\bigskip

\begin{abstract}
 We obtain an energy inequality on null surfaces $u=const$ in the Bondi-Sachs formalism. We show that for a sufficiently regular event horizon $H$  there is an affine radial coordinate which is constant on $H$. Then the energy inequality can be  prolongated to the horizon giving an estimation which is closely related to the Penrose inequality. We test it for the Kerr solution written in the Fletcher-Lun coordinates.
\end{abstract}

\bigskip

\noindent
 Keywords:
the Trautman-Bondi energy,   the Penrose inequality, the Kerr metric

\null

\noindent
PACS numbers: 04.20.Ha

\null

\section{Introduction}
Since the positive energy theorem was proved by Schoen and Yau \cite{sy1,sy2}  there have been many attempts to strengthen this result in the case of black holes (see \cite{m,d} for a review). They were motivated by the weak cosmic censorship conjecture of Penrose  which led him to  the conclusion \cite{p} that
\be
M_{ADM}\geq \sqrt{\frac{A}{16\pi}}\ ,\n{1b}
\ee
where $M_{ADM}$ is the Arnowitt-Deser-Misner mass and $A$ is an area of a black hole horizon. It seems that a knowledge of the global structure of spacetime is not necessary to obtain inequality (\ref{1b}). It should be also satisfied for asymptotically flat initial data admitting a trapped surface with area $A$.  In order to prove it several   quasi local definitions of mass \cite{g,haw} were used. In the case of spherical symmetry  inequality (\ref{1b}) was obtained for maximal data by Malec and Murchadha \cite{mm}  and then generalized by Hayward \cite{h}. One of the most important recent results is a proof of the Penrose inequality  for  time-symmetric data  \cite{hi,b}. The general case remains an open problem, although there are some indications of how to solve it  \cite{bk}. 

Soon after publishing the paper of Schoen and Yau the positivity of the Bondi mass (also called the Trautman-Bondi energy) was  established \cite{in,lv}. 
There were  also  attempts to prove the Penrose inequality for the characteristic data  \cite{lv1,ber} (see \cite{m} for a discussion). 

Recently, an alternative  proof of the positivity of energy of a null surface $u=const$  was  presented  by Chru\'{s}ciel and Paetz \cite{cp} and Tafel \cite{t} (see also \cite{kt}).  A drawback of this proof is that it assumes a regular cone properties of the null surface.  In the present paper we do not make  this assumption. We express the Trautman-Bondi energy in terms of null expansions of 2-dimensional cuts of $u=const$. Then we consider a limit of integral consequences of the Einstein equations   when the event horizon is approached. Under some regularity assumptions, the positivity of the Trautman-Bondi energy follows. Moreover, we obtain an inequality which becomes the Penrose inequality if asymptotic data on  the horizon and at null infinity are appropriately correlated.   
The latter  result is not a proof of the Penrose inequality in terms of initial data. It  rather shows that this inequality can be satisfied, even if the  gravitational collapse did not approach  the equilibrium state represented by  the Kerr metric.

In section 2 we introduce the Bondi-Sachs approach to asymptotical flatness \cite{bbm,s} and its version \cite{nu}  with an affine coordinate  instead of the luminosity distance. We present several forms of the Trautman-Bondi energy. Following \cite{t}, in section 3, we use one of the Einstein equations in order to obtain a lower bound of the energy as an integral over a  cut of the surface $u=const$. In section 4, we assume that  metric admits an event horizon $H$ with a reasonable level of regularity. We show that there exists an affine distance $r$ which is  constant on $H$. This allows to continue the energy inequality of section 3 to the horizon. In this way we obtain a lower bound of the energy (see equation (\ref{34})) given by value of $r$ on $H$ with a correction depending on data  at null infinity. Under an assumption about a long time behavior of asymptotic data   the Penrose inequality follows. In section 5, we test the above results for the  Kerr 
metric written in terms of
coordinates closely related to those of 
Fletcher and Lun \cite{fl}. In this case, surfaces $u=const$ have conical singularities at $\theta=0,\pi$. A direct calculation made with  Mathematica 10 shows that  energy inequality (\ref{34})  is   stronger than the Penrose inequality.

\section{The Trautman-Bondi energy}
The approach to gravitational radiation originated by Bondi assumes a foliation of spacetime by null surfaces $u=const$ of the form $R\times S_2$. The factor $R$ corresponds to null geodesics generated by  $u^{,\alpha}\p_{\alpha}$ and $S_2$ is the 2-dimensional sphere.  In adapted coordinates (called the generalized Bondi-Sachs coordinates in \cite{fl}) metric  can be written in the form 
\begin{equation}
g=du(g_{00}du+2g_{01}dr+2\omega_{A} dx^A)+g_{AB}dx^Adx^B\ .\label{1a}
\end{equation}
  Coordinate $x^0=u$ is interpreted as a retarded time  and $x^A$, where $A=2,3$, are coordinates on $S_2$. The standard metric of the sphere is given by $s_{AB}$. One can still impose a condition on the coordinate $x^1=r$. That of Bondi and Sachs reads 
  \be
\sigma=r^2\sigma_s\ ,\n{3c}
\ee
where
  \be
\sigma=\sqrt{\det  g_{AB}}\ ,\ \ \sigma_s=\sqrt{\det s_{AB}}\ .\n{3b}
\ee
In this case  $r$ is called the luminosity distance. For an asymptotically flat metric condition (\ref{3c}) allows to write conveniently  the total energy corresponding to the surface $u=const$  
\be
E(u)=\frac{1}{4\pi}\int_{S_2} M_B \sigma_sd^2x\ .\label{5}
\ee
Here $M_B$ is a function of $u$ and $x^A$ given by the asymptotic expansion $g_{00}\simeq 1-\frac{2M_B}{r}$. Function 
$E$ is called the Bondi mass or the Trautman-Bondi energy since  it coincides with  energy defined in a different way by Trautman \cite{tr}. If the Ricci tensor vanishes sufficiently quickly  at null infinity	
then  $E(u)$  diminishes with time what is interpreted as an effect of the outgoing gravitational radiation \cite{bbm,s,tr}.

In order to find  inequalities satisfied by  $E$, it is more convenient to replace condition (\ref{3c}) by the Newman-Unti condition $g_{01}=1$ (see \cite{bl} for a recent discussion of this gauge). Then metric reads
\begin{equation}
g=du(g_{00}du+2dr+2\omega_{A} dx^A)+g_{AB}dx^Adx^B\label{1}
\end{equation}
  and $r$ is the affine parameter along null geodesics tangent to $u=const$. If not  stated otherwise, in the rest of the paper  we  assume that metric is given  by (\ref{1}) in a part of spacetime of the form $\mathcal{M}'=(u_0,u_1)\times (r_0,\infty)\times S_2$. Function $g_{00}$, form $\omega_Adx^A$ and tensor $g_{AB}dx^Adx^B$ are smooth (with exceptions in section 5) on $\mathcal{M'}$.  Moreover, the following  asymptotical flatness conditions  (see \cite{t}) are assumed
\begin{equation}
g_{00}=1+\frac 12 n_{,u}-2Mr^{-1}+O(r^{-2})\label{2}
\end{equation}
\begin{equation}
 g_{AB}=-s_{AB}r^2+ n_{AB}r+m_{AB}+O(r^{-1})\label{4}
\end{equation}
\begin{equation}
\omega_{A}=\psi_A+\kappa _A r^{-1}+O(r^{-2})\ .\label{3}
\end{equation}
Here,  all coefficients are independent of $r$ and 
\be
n=-s^{AB}n_{AB}\ .\n{3a}
\ee

Given (\ref{1})-(\ref{3}), one can define the Bondi luminosity distance (now denoted by $r_B$) as
\be
r_B=\sqrt{\frac{\sigma}{\sigma_s}}\ .\label{4c}
\ee
It follows from (\ref{4}) that asymptotically 
\be
r_B= r+\frac {1}{4}n+O(r^{-1})\ .\label{4b}
\ee
In order to define  energy (\ref{5}), we can use  expressions for $M_B$  following from (20) and (32) in \cite{t} 
\be
M_B=\frac {1}{4\sigma_s}\lim_{r\rightarrow\infty}{(\sigma_{,r}+g^{11}\sigma_{,r}+2\sigma_{,u})}\label{6a}
\ee
\be
M_B=\frac {1}{4\sigma_s}\lim_{r\rightarrow\infty}{(2\sigma_sr+g^{11}\sigma_{,r}+2\sigma_{,u})}+\frac 18n\ ,\label{6}
\ee
where
\be
g^{11}=\omega^A\omega_A-g_{00}\ .\n{8}
\ee
(Remark: There should be a minus sign  before $g_{00}$ in (21) in \cite{t}. This change does not have an impact on further considerations in \cite{t}).
 
 Expression (\ref{5}) can be written in terms of  expansions $\theta_{\pm}$ of null rays emitted orthogonally from  2-dimensional surfaces $S(u,r)$ given by constant $u$ and $r$. Outgoing rays are tangent to $\p_r$ and ingoing ones are tangent  to 
 \be
 k=g^{11}\p_{r}+2\p_{u}-2\omega^A\p_A\ .\label{9}
\ee
The scalar product of these  vectors  yields
\be
g(k,\p_r)=2\ .\n{12}
\ee
The Lie derivative of $g$ with respect to $\p_r$ and $k$ defines  tensors on $S(u,r)$ with traces given by
\be
\theta_+=(\ln{\sigma})_{,r}\ ,\ \ \theta_-=g^{11}(\ln{\sigma})_{,r}+2(\ln{\sigma})_{,u}-2\omega^A_{\ |A}\ ,\n{10}
\ee
where ${}_{|A}$ denotes the covariant derivative with respect to metric $g_{AB}$.
Hence, equation (\ref{6a}) is equivalent to 
\be
M_B=\frac {1}{4\sigma_s}\lim_{r\rightarrow\infty}{\sigma(\theta_++\theta_-+2\omega^A_{\ |A})}\label{11}
\ee
and (\ref{5}) takes the form
\be
E(u)=\frac{1}{16\pi}\lim_{r\rightarrow\infty}{\int_{S(u,r)} (\theta_++\theta_-) \sigma d^2x}\ .\label{11a}
\ee

Note that given a spherical surface $S$,  expansions $\theta_{\pm}$ are not uniquely defined  locally since null vectors orthogonal to $S$ are given up to a conformal factor. 
Thus, $(\theta_++\theta_-)$ cannot be considered as a quasi local mass related to   $S$. In order to distinguish expansions giving rise to (\ref{11a}) one should follow outgoing  null geodesics emitted orthogonally from $S$  and establish an affine parameter $r$  by means of the condition $g_{AB}\approx-s_{AB}r^2$.  Once this is done, vector (\ref{9}) is also uniquely defined as a null vector orthogonal to $S$ and satisfying (\ref{12}).

\section{Energy inequalities}
In \cite{cp,t} it was shown that if a null surface $u=const$  can be continued to the past up to a regular vortex, the corresponding Trautman-Bondi energy is nonnegative provided that some of the Einstein equations and the dominant energy condition are satisfied.  These equations are equivalent to the Raychaudhuri equation $R_{11}=T_{11}$  and equation  $g^{AB}R_{AB}=g^{AB}T_{AB}-T$, where $T=T^{\mu}_{\ \mu}$ and $R_{\mu\nu}$ and $T_{\mu\nu}$ are, respectively, the Ricci tensor and the energy-momentum tensor. They read
\be
(\ln \sigma)_{,rr}-\frac{1}{4}g_{AB,r}g^{AB}_{\ \ ,r}=-T_{11}\label{13}
\ee
\be
\sigma^{-1}(\sigma\theta_-)_{,r}-R^{(2)}+(\omega^A_{\ ,r})_{|A}+\frac{1}{2}g_{AB}\omega^A_{\ ,r}\omega^B_{\ ,r}=T-g^{AB}T_{AB}\ ,\label{14}
\ee
where $\theta_-$ is given by (\ref{10}) and 
$R^{(2)}$ is the Ricci scalar  of  metric $g_{AB}$. In this section we analyze these equations with no  assumptions regarding existence of vortices of surfaces $u=const$.

 Integrating (\ref{14})  with density $\sigma$ over a surface $S(u,r)$ of constant $u$ and $r$ yields 
\be\label{15}
\int_{S(u,r)}{(\sigma\theta_-+2\sigma_sr)_{,r}d^2x}=\int_{S(u,r)}P\sigma d^2x\ ,
\ee
where 
\be\label{16}
P=T-g^{AB}T_{AB}-\frac{1}{2}g_{AB}\omega^A_{\ ,r}\omega^B_{\ ,r}
\ee
and the Gauss-Bonet theorem 
\be
\int_{S_2}{R^{(2)}\sigma d^2x}=-8\pi\label{17}
\ee
has been used. In view of (\ref{5}) and (\ref{6}) integration of (\ref{15})  from $r$  to $\infty$ leads to
\be\label{18}
16\pi E(u)=\int_{S(u,r)}{(2\sigma_sr+\frac 12n\sigma_s+\sigma\theta_-)d^2x}+\int_{r}^{\infty}dr'\int_{S(u,r')}P\sigma d^2x\ .
\ee
Note that  the r. h. s. of (\ref{18}) can be calculated for any value of $r$.

If the dominant energy condition is satisfied, then $P\geq 0$ \cite{t} and it follows from (\ref{18}) that
\be\label{19a}
E\geq\frac 12r+\frac 18\langle n \rangle+\frac{1}{16\pi}\int_{S(u,r)}{\sigma\theta_-d^2x}\ ,
\ee
where $\langle n\rangle$ is the $u$-dependent mean value of $n$ over $S_2$
\be\n{20}
\langle n \rangle=\frac{1}{4\pi}\int_{S_2}{n\sigma_sd^2x}\ .
\ee
The integral term in (\ref{19a}) is negative for big values of $r$ because metric is asymptotically flat.
In order to show that the sum of the first two terms is positive,  let us  write equation
 (\ref{13}) in the following form 
\be
\frac{1}{4}\hat r^2\hat g_{AB,r}\hat g^{AB}_{\ \ ,r}-(\hat r^2(\ln \hat\sigma)_{,r})_{,r}=\hat r^2T_{11}\ ,\label{39}
\ee
where 
\be\n{39a}
\hat r=r+\frac 14 n
\ee
and
\be 
g_{AB}=\hat r^2\hat g_{AB}\ ,\ \ \sigma=\hat r^2\hat \sigma\ .\n{39a}
\ee
Integrating (\ref{39})  between $r$ and $\infty$ yields
\be
\hat r^2(\ln \hat\sigma)_{,r}=\int_r^{\infty}{(r'+\frac 14n)^2(T_{11}-\frac{1}{4}\hat g_{AB,r}\hat g^{AB}_{\ \ ,r}})dr'\ ,\label{40}
\ee
where approximation (\ref{4b}) was used to  obtain
\be\label{41}
\lim_{r\rightarrow\infty}\hat r^2(\ln \hat\sigma)_{,r}=0\ .
\ee
 For big values of $r$ there is $\hat r>0$. Suppose that we diminish $r$ along an integral line of $u^{,\alpha} \p_{\alpha}$  and achieve a point $p$  where $\hat r=0$. Then
\be
\hat r^2(\ln \hat\sigma)_{,r}=\hat r^2(\ln \sigma)_{,r}-2\hat r=0\ \ at\ \ \ p\ .\n{42}
\ee
If the dominant energy condition is satisfied, equation (\ref{40}) implies
\be
T_{11}=0\ ,\ \ g_{AB}=f_{AB}\hat r^2\ ,\ \ f_{AB,r}=0\n{43}
\ee
everywhere on the line from $p$ to $\infty$. It follows from (\ref{43}) that $g_{AB}$ is degenerate at $p$, so this point is either singularity or a vortex of the cone $u=const$. In any case it cannot belong to the considered domain  $\mathcal{M}'$.
Thus, without  a loss of generality we can assume that in this domain
\be
\hat r=r+\frac 14n> 0\ ,\n{44}
\ee
hence
\be
r+\frac 14\langle n\rangle> 0\ .\n{44a}
\ee

Among all (positive) affine coordinates  $\hat r=r+\frac 14n$ is distinguished by the asymptotical flatness conditions. From (\ref{4c}) and (\ref{4b}) one obtains
\be \n{53a}
\sigma=\sigma_s\hat r^2+O(1)\ \ if\ \ r\rightarrow\infty\ .
\ee
Since there is no term linear in $\hat r$ in (\ref{53a}) the coordinate
$\hat r$ can be considered as the best affine approximation  of the luminosity distance $r_B$ . If  $r=\hat r$ inequality (\ref{19a}) can be further simplified in the following way.   From (\ref{40}) one obtains
\be
\hat\sigma_{,r}\geq 0\label{45}
\ee
(note that this inequality is stronger than  $\sigma_{,r}\geq 0$ obtained by a direct integration of (\ref{13})).
Since $\lim_{r\rightarrow\infty}{\hat\sigma}=\sigma_s$, inequality (\ref{45}) implies
\be
\sigma\leq \sigma_s\hat r^2\ .\label{46}
\ee
 If $\hat r=const$, then 
integrating (\ref{46}) over coordinates $x^A$ leads to
\be
A\leq 4\pi\hat r^2\ ,\n{46a}
\ee
where $A$ is an area of $S(u,r)$. Substituting $\hat r\geq\sqrt{A/4\pi}$ into (\ref{19a}) yields
\be\label{46b}
E\geq\sqrt{\frac{A}{16\pi}}+\frac{1}{16\pi}\int_{S(u,r)}{\sigma\theta_-d^2x}\ ,
\ee

Because of the integral terms in (\ref{19a}) and (\ref{46b}) we are not able to prove that $E\geq 0$.
We will show in the next section that this situation  may change if $S(u,r)$ approaches the event horizon.

\section{Energy estimates on the event horizon}
In this section we will investigate a long $u$ limit of (\ref{19a}) for spacetime containing a black hole region. We will make the following assumptions:

(i) Spacetime $\mathcal{M}$ admits a submanifold $\mathcal{M}'$ isomorphic to $(u_0,\infty)\times (r_0,\infty)\times S_2$ with  metric (\ref{1}) satisfying conditions (\ref{2})-(\ref{3}).

(ii) $\mathcal{M}'$ has a smooth boundary $H$ (horizon) where $u=\infty$. Metric (\ref{1}) can be continued through $H$ in another system of coordinates.

(iii) There exists a  positive function $\beta$  on $\mathcal{M}'$ such that field $l_0=\beta\p_r$ extends to  $H$ as a nonvanishing null vector tangent to $H$. 

In the conformal approach \cite{p} to asymptotical flatness, the future null infinity is identified with the boundary (scri)  $J^+$ of the compactified spacetime.  Given the section $u=const$ of $J^+$, the physical metric can be always transformed to the form (\ref{1}) in a neighborhood of the section. What is important in assumption (i) is that this neighborhood does not shrink to the point $i^+$ when $u\rightarrow\infty$.  Assumption (ii) is rather standard in the theory of black holes, however, in general, the horizon does not have to be smooth. If it is smooth, we may expect that $H$ together with surfaces $u=const$ form a regular null foliation given by levels of a new coordinate $\tilde u=f(u)$  finite on $H$.  If $\tilde u$ has a nonvanishing gradient on $H$ then $l_0=\tilde u^{,\alpha}\p_{\alpha}=f_{,u}\p_r$ assures condition (iii). Thus, (iii) refers  to regularity of a passage from surfaces $u=const$ to $H$.  Note that all conditions (i)-(iii) are satisfied by  the Schwarzschild solution. 

\begin{prop}
If conditions (i)-(iii) are satisfied then there exists an affine coordinate $r$ such that
 \be
r=r_H=const\ \ on\ \ H\ .\n{27}
\ee
\end{prop}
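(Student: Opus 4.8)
The plan is to exploit the residual freedom in the affine parametrization together with condition (iii). First I would pin down the admissible reparametrizations. Since $g_{ru}=1$ and $g_{rr}=g_{rA}=0$, one has $g(\partial_r,\cdot)=du$, so $\partial_r=u^{,\alpha}\partial_\alpha$ is the metric dual of $du$ and the generators of $u=const$ are the integral curves of $\partial_r$ with $r$ an affine parameter. Any competing affine coordinate has the form $\rho=a(u,x^A)\,r+b(u,x^A)$, and a short computation shows that demanding $\rho$ again satisfy the Newman--Unti condition $g_{01}=1$ yields $g_{u\rho}=1/a$, forcing $a=1$. Hence the admissible changes are exactly the shifts $r\mapsto r+b(u,x^A)$, and these preserve the asymptotic conditions (\ref{2})--(\ref{4}) (for instance $n\mapsto n-4b$), so the machinery of Sections 2--3 is untouched.

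Next I would read off the behaviour of $r$ on $H$ from (iii). Writing $l_0=\beta\partial_r$ gives $l_0(r)=\beta$ and, since $\partial_r=u^{,\alpha}\partial_\alpha$, also $g(l_0,\cdot)=\beta\,du$. Because $l_0$ extends to $H$ as a nonvanishing null field tangent to the null hypersurface $H$, its dual $g(l_0,\cdot)$ is the null conormal of $H$ and must annihilate $TH$; thus $H$ is a level set of the coordinate $\tilde u=f(u)$ described before the proposition, for which $l_0=f_{,u}\partial_r$, i.e. $\beta=f_{,u}$. Finiteness of $\tilde u$ on $H$, guaranteed by the regular extension of assumption (ii), then forces $f_{,u}\to 0$, hence $\beta\to 0$, as $u\to\infty$. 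Consequently, along each generator of $H$ — an integral curve of $l_0|_H$, parametrized by $s$ — we have $dr/ds=l_0(r)=\beta\to 0$, so $r$ has a limit on $H$ that is constant along every generator: $r|_H=R(x^A)$, a function of the generator label alone.

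Finally I would remove this residual angular dependence by a shift. Taking $b=r_H-R(x^A)$ for an arbitrary constant $r_H$, the affine coordinate $r+b$ equals $r_H$ on all of $H$, which is the claim; note that $l_0(r+b)=\beta$ is unchanged, so the new coordinate remains constant along generators. I expect the genuine difficulty to lie not in this bookkeeping but in justifying the boundary limits invoked above: that $r$ extends continuously to $H$ in the regular chart of (ii), that the angular coordinates $x^A$ descend to labels of the horizon generators as $u\to\infty$, and above all that $\beta\to 0$ there (equivalently that $\partial_r$ blows up at $H$, which is precisely what lets $l_0=\beta\partial_r$ stay finite and nonzero). Controlling these limits rigorously is exactly where assumptions (ii) and (iii) must do the work, and is the crux of the argument.
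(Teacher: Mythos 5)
Your proposal is correct and takes essentially the same route as the paper: one argues $\beta\to 0$ on $H$ (because $u=\infty$ there, so $\p_r=u^{,\alpha}\p_\alpha$ is singular while $\beta\p_r$ stays finite), deduces from $l_0(r)=\beta\to 0$ that the boundary value of $r$ is constant along each null generator of $H$, and then uses the residual shift freedom $r\mapsto r+b(u,x^A)$ of the Newman--Unti gauge to make this constant the same on all of $H$. The only cosmetic difference is the final normalization --- the paper fixes the shift by demanding $r=r_0$ on a spacelike surface $\Sigma$ intersecting $H$ and lets constancy along generators propagate $r_0$ over $H$, whereas you shift by $r_H-R(x^A)$ --- and the analytic caveats you honestly flag (continuity of $r$ up to $H$, generators labelled by $x^A$) are likewise left implicit in the paper's own proof.
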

\begin{proof}
Vector $\p_r$ is singular on $H$ since  $\p_r=u^{,\alpha}\p_{\alpha}$  and $u=\infty$ on $H$. A derivative of $u$ along a curve which  crosses surfaces $u=const$ and approaches $H$ cannot be bounded. The necessary condition to prolongate $\beta\p_r$ to  $H$ is
\be
\beta=0\ \ on\ \ H\ .\n{25}
\ee
Let  $l_0=\p_v$ in coordinates $x'^{\mu}=v,r',x'^A$ nonsingular in a neighborhood of $H$. 
From (\ref{25}) and $\p_rr=1$ it follows that $l_0(r)\rightarrow 0$ when $H$ is approached. An integral of $l_0(r)dv$ between $v_1$ and $v_2$ along  null lines tangent to  $u=const$ tends to zero. Hence, if $r$ is finite at a point $p$ of  $H$, it is constant along a null geodesic which is tangent to $H$ and passes through $p$. If we choose an affine coordinate $r$ such that $r=r_0$ on a spacelike surface $\Sigma$ intersecting $H$ then (\ref{27}) is satisfied with $r_H=r_0$.

\end{proof}
Condition (\ref{27}) says nothing about  derivatives of $r$  on $H$. If they exist then 
\be
r^{,\mu}\p_{\mu}=-\gamma l_0\ \ \ on\ \ H \ ,\n{30}
\ee
where $\gamma$ is a nonnegative function. Since 
\be 
g^{11}=r^{,\mu}r_{,\mu}=0\ \ on\ \ H
\ee
function $g^{11}/\beta$ may be nonsingular on $H$ (e.g. it happens if  $g^{11}$ and $\beta$ have first order zeros  at points of $H$). In this case
\be 
g^{11}\p_r=-\alpha l_0\ ,\n{31}
\ee
where function $\alpha\geq 0$ is nonsingular on $H$. Moreover,
since
\be
r^{,\mu}\p_{\mu}=g^{11}\p_r+\p_u-\omega^A\p_A\n{32}\ ,
\ee 
vector $\p_u-\omega^A\p_A$ is also nonsingular and null on $H$
\be
\p_u-\omega^A\p_A=(\alpha-\gamma) l_0\ \ \ on\ \ H \ .\n{33}
\ee
Note that  $\p_u-\omega^A\p_A$ is tangent to surfaces $r=const$ and orthogonal to their foliation by $u=const$. Thus, regularity of $\p_u-\omega^A\p_A$  on $H$ can  be understood as regularity of this foliation. Summarizing, if $g^{11}/\beta$ and derivatives of $r$ are nonsingular on $H$ then vectors $r^{,\mu}\p_{\mu}$, $g^{11}\p_r$ and $\p_u-\omega^A\p_A$ are extendible to the horizon and  parallel to null generators of $H$.

The main result of this paper is given by the following proposition.
\begin{prop}
 Let (i)-(iii) and  the dominant energy condition be satisfied and
 \be
0\leq \lim_{r\rightarrow\infty}{r^2R_{\mu\nu}r^{,\mu}r^{,\nu}}<\infty\ .\n{38d}
\ee
 Then
 \be\label{34}
E(u)\geq\frac 12r_H+\frac 18\langle n \rangle_{\infty}\geq 0\ ,
\ee
where $\langle n \rangle_{\infty}=\limsup_{u\rightarrow\infty}\langle n \rangle$ and condition (\ref{27}) is assumed. If, moreover,
\be\n{34a}
\lim_{u\rightarrow\infty}{(n-\langle n\rangle)}=0
\ee
then 
 \be\label{35}
E(u)\geq\sqrt{\frac {A_H}{16\pi}}\ ,
\ee
where 
$A_H$ is the  area of a spacelike spherical section of $H$.

\end{prop}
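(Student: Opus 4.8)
The plan is to take the $u\to\infty$ limit of the energy inequality (\ref{19a}), which holds for every $u$ and every $r$, and then to convert the resulting bound on $\lim_{u\to\infty}E$ into a bound valid for each $u$. First I would adopt the affine coordinate furnished by the preceding Proposition, so that $r=r_H$ on $H$, and observe that a point of $H$ is reached only in the double limit $u\to\infty$, $r\to r_H$: at a finite $u$ the surface $u=const$ never meets $H$, and $g^{11}$ does not vanish at $r=r_H$ away from the horizon. Hence the whole argument must be run as $u\to\infty$ with $r=r(u)\to r_H$.

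In (\ref{19a}) the first term obeys $\frac12 r(u)\to\frac12 r_H$. For the second term I would choose the sequence $u_k\to\infty$ along which $\langle n\rangle\to\limsup_{u\to\infty}\langle n\rangle=\langle n\rangle_\infty$. To pass from a statement about the limit to the claimed bound $E(u)\geq\ldots$ for every $u$, I would invoke the monotonicity of the Trautman-Bondi energy: the dominant energy condition together with the fall-off (\ref{38d}) makes the flux in the Bondi mass-loss formula non-negative, so $E$ is non-increasing; then $\lim_{u\to\infty}E$ exists and $E(u)\geq\lim_{u\to\infty}E$, so it suffices to show $\lim_{u\to\infty}E\geq C$ with $C=\frac12 r_H+\frac18\langle n\rangle_\infty$.

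The crux is the integral term $\frac1{16\pi}\int_{S(u,r)}\sigma\theta_-\,d^2x$. I would first rewrite it as $\frac1{16\pi}\int_{S}(g^{11}\sigma_{,r}+2\sigma_{,u})\,d^2x$, since by (\ref{10}) the piece $-2\sigma\omega^A_{\ |A}$ is a total divergence on the closed cut and integrates to zero. On $H$ one has $g^{11}=r^{,\mu}r_{,\mu}=0$, and by (iii) together with (\ref{31})-(\ref{33}) the ingoing normal $k$ becomes proportional to the null generator $l_0$; the remaining integral then reduces to (a multiple of) the derivative of the cross-sectional area of $H$ along its generators. The area theorem, which follows from the null energy condition implied by the dominant energy condition, makes this rate non-negative, and since the horizon area is monotone and bounded it converges, so the rate vanishes along a subsequence $u_k\to\infty$. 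Choosing this subsequence to coincide with the one realizing $\langle n\rangle_\infty$, the integral term is non-negative in the limit, which yields $\lim_{u\to\infty}E\geq\frac12 r_H+\frac18\langle n\rangle_\infty$ and hence the first inequality in (\ref{34}). I expect this step to be the main obstacle, because the outgoing and ingoing null normals both degenerate onto $l_0$ at $H$, so $\sigma_{,r}$ and $\sigma_{,u}$ are individually singular or non-vanishing there and only their specific combination is controlled; making the reduction to the area rate rigorous is where the regularity hypotheses (i)-(iii) and the far-future behaviour of the generator expansion are really needed.

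The positivity $C\geq 0$ is immediate from (\ref{44a}) in the limit $u\to\infty$, giving the second inequality in (\ref{34}). For the sharper bound (\ref{35}) I would write $C=\frac12\hat r_H$ with $\hat r_H=r_H+\frac14\langle n\rangle_\infty$ and use the additional hypothesis (\ref{34a}): since $n\to\langle n\rangle$, the function $\hat r=r+\frac14 n$ becomes constant over the limiting cut of $H$, so integrating the pointwise bound (\ref{46}), $\sigma\leq\sigma_s\hat r^2$ (valid once $\hat r$ is taken as the affine coordinate and the dominant energy condition holds), over the sphere gives $A_H\leq 4\pi\hat r_H^2$. Therefore $\frac12\hat r_H\geq\sqrt{A_H/16\pi}$, and combined with the first part this gives $E(u)\geq\sqrt{A_H/16\pi}$.
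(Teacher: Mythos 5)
Your overall architecture --- the $u\to\infty$ limit of (\ref{19a}) along cuts approaching $H$, monotonicity $E_{,u}\leq 0$ under (\ref{38d}) to pass from $E_\infty$ back to $E(u)$, positivity from (\ref{44a}), and the Penrose part from integrating (\ref{46}) under (\ref{34a}) --- is the paper's, and your final paragraph is essentially the paper's proof of (\ref{35}). The gap is exactly at the step you yourself flag as the crux: the limit of the term $\frac{1}{16\pi}\int\sigma\theta_-\,d^2x$. You invoke (\ref{31})--(\ref{33}) to say that $k$ becomes proportional to the generator $l_0$ on $H$ and that the integral becomes ``a multiple of'' the area rate of the horizon, non-negative by the area theorem. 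Three things go wrong. First, (\ref{31})--(\ref{33}) are not consequences of hypotheses (i)--(iii): in the paper they are derived only under additional assumptions (existence of derivatives of $r$ on $H$, nonsingularity of $g^{11}/\beta$), so you are importing regularity the proposition does not grant. Second, even granting them, they yield $k=(\alpha-2\gamma)\,l_0$ on $H$ with $\alpha\geq 0$, $\gamma\geq 0$, and the coefficient $\alpha-2\gamma$ is a function on $S_H$ of uncontrolled sign; the area theorem controls the sign of the expansion of $l_0$, not of this coefficient, so non-negativity of the integral does not follow. Third, your subsequence argument has nothing to act on: in the relevant construction the cuts $S_u$ lie on a fixed spacelike hypersurface $\Sigma$ and converge to the single cut $S_H=\Sigma\cap H$ at finite advanced time; they do not slide up the horizon generators, so there is no family of horizon cross-sections whose area rate could be made to vanish along a subsequence.

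The paper closes this step with a stronger and cleaner observation that uses assumption (iii) directly: by (\ref{9}) and (\ref{12}), $k_0=k/\beta$ is the unique null vector orthogonal to the cuts $S_u$ normalized by $g(k_0,l_0)=2$, hence it remains finite and well defined on $S_H$; since $\beta=0$ on $H$, one gets $k=\beta k_0\to 0$ on $S_H$, i.e.\ in your notation the factor $\alpha-2\gamma$ is exactly zero. Consequently $\lim_{u\to\infty}\int_{S_u}L_k\chi=0$ --- the integral term vanishes rather than merely being non-negative --- which gives (\ref{38}) and, with $E_{,u}\leq 0$ and (\ref{44a}), the inequality (\ref{34}). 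If you replace your area-theorem step by this normalization argument, the remainder of your proposal goes through essentially as written.
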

\begin{proof} 
Let $r$ be the affine coordinate satisfying (\ref{27})
and $S_H$ be a spacelike spherical surface contained in $H$. We extend $S_H$  to a spacelike section $\Sigma$ of surrounding surfaces $u=const$ and modify $r$ in such  way that $r=r_H$ on $\Sigma$ and $H$. 
 Let  $S_u$ denote an intersection of $\Sigma$ with $u=const$. It follows from (\ref{9}) and (\ref{12})  that $k_0=k/\beta$  is the unique null vector which is orthogonal to surfaces $S_u$ and satisfies $g(k_0,l_0)=2$. Thus, $k_0$ is also defined on $S_H$. It is  possible only if vector $k$  vanishes when $H$ is approached 
\be
k=0\ \ on\ \ S_H\ .\n{36}
\ee

Let us consider inequality (\ref{19a}) for $S_u$ in the form
\be\label{19c}
E(u)\geq\frac 12r+\frac 18\langle n \rangle+\frac{1}{16\pi}\int_{S_u}{L_k\chi}\ ,
\ee
where $\chi=\sigma dx^2\wedge dx^3$  and $L_k$ denotes the Lie derivative along vector (\ref{9}).
 If $u\rightarrow \infty$  inequality (\ref{19c}) yields
\be\label{37}
E_{\infty}\geq\frac 12r_H+\frac 18\langle n \rangle_{\infty}+\frac{1}{16\pi}\lim_{u\rightarrow\infty}{\int_{S_u}{L_k\chi}}\ ,
\ee
where $E_{\infty}$ is the limit of function $E(u)$ (note that  $E_{,u}\leq 0$  under condition (\ref{38d}) \cite{t1}). In virtue of (\ref{36}) the last term in (\ref{37}) vanishes (to show this it is convenient to extend $k_0$ and $\chi$ from $\Sigma$ to its neighborhood and introduce new coordinates such  that $k_0=\p_\tau$ and $\chi$ does not contain differentials $d\tau$). Hence,  inequality  (\ref{37}) reduces to 
\be\label{38}
E_{\infty}\geq\frac 12r_H+\frac 18\langle n \rangle_{\infty}\ .
\ee
 Using again $E_{,u}\leq 0$ and (\ref{44a})  implies (\ref{34}).

Integrating (\ref{46}) over $S_u$ yields
\be
A_u\leq 4\pi (r+\frac 14\langle n \rangle)^2+\frac {\pi}{4}(\langle n^2\rangle-\langle n \rangle^2)\ ,\label{47}
\ee
where $A_u$ is the area of $S_u$ and
\be
\langle n^2\rangle=\frac{1}{4\pi}\int_{S_2}{\sigma_sn^2}\n{48}
\ee
is the mean value of $n^2$ over the sphere. If condition (\ref{34a}) is satisfied and
 $u$ is big enough, 
then the last term in (\ref{47}) becomes smaller then $A_u$. In view of (\ref{44a}) one obtains
\be
\sqrt{\frac {4}{\pi}A_u-\langle n^2\rangle+\langle n \rangle^2}\leq 4r+\langle n \rangle\label{50}
\ee
and
\be
\sqrt{\frac {A_H}{4\pi}}\leq r_H+\frac 14\limsup_{u\rightarrow\infty}{\langle n \rangle}\label{51}
\ee
as a limit of (\ref{50}) when $u\rightarrow\infty$.
 A combination of inequalities  (\ref{34}) and (\ref{51})  yields (\ref{35}).

\end{proof}
Equation (\ref{34a}) means that for big values of $u$ function $n$ tends to a function $f(u)$.  If $f$ has a limit for $u\rightarrow\infty$ then 
$n\rightarrow const$ and   
\be\n{52a}
\hat r=r+\frac 14n=const\ \ on\ \ H\ .
\ee
It is difficult to predict, if  coordinates with properties (\ref{27}) and (\ref{34a})  exist.  Perhaps one can  impose (\ref{34a}) or even (\ref{52a}) using the Bondi-Metzner-Sachs (BMS) group. A supertranslation 
\be
u'=u+\alpha(x^A)+O(r^{-1})\ ,\n{52}
\ee
where $\alpha$ is a function on the sphere, 
implies  
\be\n{53}
\hat r'=\hat r-\frac 12\Delta_s\alpha+O(r^{-1})
\ee
(see e.g. \cite{t1}), where $\Delta_s$ is the standard Laplace operator on the sphere. We are not able to extract a new function $n'$ from (\ref{53}) since  the field $u^{,\alpha}\p_{\alpha}$  changes nontrivially under (\ref{52}) and we do not know which new affine distance $r'$ is constant on $H$. Still, a freedom of function $\alpha$  may be sufficient to obtain condition (\ref{34a}) in some  coordinates. Then the Penrose inequality must be satisfied for any system of  coordinates $u,r,x^A$. Moreover, surfaces $u=const$ in coordinates satisfying (\ref{34a}) could be used to define a system of ``good cuts'' of the scri $J^+$. In this way the BMS group would be broken to the Lorentz group (or its subgroup) what is important for a definition of the angular momentum (see \cite{sz} for a review).

\section{Energy estimates in the Kerr metric}
In the Boyer-Lindquist coordinates the Kerr solution with $a<m$ reads 
\be 
g=(1-\frac{2mr}{\rho^2})dt^2+\frac{4mar\sin^2{\theta}}{\rho^{2}}dtd\varphi-\frac{\Sigma^2\sin^2{\theta}}{\rho^2}d\varphi^2-\frac{\rho^2}{\Delta}dr^2-\rho^2d\theta^2\ ,
\label{54}
\ee
where
\be
\rho^2=r^2+a^2\cos^2{\theta}\ ,\ \ \Delta=r^2-2mr+a^2\ ,\ \Sigma^2=(r^2+a^2)^2-a^2\Delta\sin^2{\theta}\ .\label{55}
\ee
The event horizon is located at $r=r_H$, where 
\be\n{55a}
r_H=m+\sqrt{m^2-a^2}\ .
\ee

The Kerr metric takes the form (\ref{1a}) in (singular) coordinates found by Fletcher and Lun \cite{fl}. Below we derive an equivalent system of coordinates  in a simplified way. We start with the ansatz
\be
u=t+f_1(r)+f_2(\theta)\ .\n{56}
\ee
Then from equation $u_{,\alpha}u^{,\alpha}=0$ one obtains
\be
u=t+\epsilon(\int{\frac{\tilde \Delta}{\Delta}dr}+\tilde\epsilon a \sin{\theta})\ ,\ \ \tilde\epsilon=\pm 1\ ,\n{57}
\ee
where 
\be
\tilde \Delta=\sqrt{r^4+a^2r^2+2a^2mr}\n{59}
\ee
and $\epsilon=-1$ for the retarded time and $\epsilon=1$ for the advanced time. 
 Coordinate $u$ becomes the  Eddington-Finkelstein time  in the Schwarzschild metric if $a=0$. For $a\neq 0$, surfaces given by $u=const$ and $t=const$ have conical singularities at $\theta=0,\pi$ directed to the exterior ($\tilde\epsilon=1$) or to the interior ($\tilde\epsilon=-1$) of these surfaces. They smooth out when we approach the horizon. A freedom of $\tilde\epsilon$ seems to be a new element with respect to coordinates defined in \cite{fl}.

 Coordinates $\tilde\theta$ and $\tilde \varphi$ should be anihilated by  vector  $u^{,\alpha}\p_{\alpha}$. Functions of this type can be found again by  separation of  variables. A preferable choice for $\tilde \varphi$ is 
 \be
 \tilde \varphi=\varphi-2\epsilon am\int_r^{\infty}{\frac{r'}{\Delta\tilde\Delta}dr'}\n{60}
 \ee
 and for $\tilde \theta$ one can take a function of the variable
 \be\n{62}
 \xi=e^{\tilde\epsilon a f}\tan{(\frac{\theta}{2}-\frac{\pi}{4})}\ ,
 \ee
 where
 \be\label{62a}
 f(r)=\int_{r_H}^r{\frac{dr'}{\tilde \Delta}}\ .
 \ee
 For our purposes it is convenient to define
 \be
 \tilde \theta=2\arctan{\xi}+\frac{\pi}{2}\n{61}
 \ee
 (this coordinate differs slightly from that in \cite{fl}).  Note that $\xi$  takes values from the interval $[-e^{\tilde\epsilon a f},e^{\tilde\epsilon a f}]$ which depends on $r$.

   In order to find an affine coordinate $\tilde r$, we have to solve equation $u^{,\alpha}\p_{\alpha}\tilde r=1$. Its particular solution of the form $f_3(r)+f_4(\theta)$ is given by
  \be\n{64a}
 \int_{r_H}^r{\frac{r'^2}{\tilde \Delta}dr'}+\tilde\epsilon a\sin{\theta}\ .
 \ee
 It can be composed with a function of $\tilde\theta$ and $\tilde\varphi$. In this way 
it is easy to construct an affine distance $\tilde r$ which equals $r_H$ on $H$ 
  \be\n{64}
 \tilde r=\int_{r_H}^r{\frac{r'^2}{\tilde \Delta}dr'}+\tilde\epsilon a(\sin{\theta}-\sin{\tilde\theta})+r_H\ .
 \ee
 Coordinates $\tilde r$ and $\tilde\theta$ coincide, respectively,  with $r$ and $\theta$ on the horizon. For $a>0$, coordinate  $\tilde \varphi$ is logarithmically divergent near $H$ and it tends to $\varphi$ if $r\rightarrow\infty$. Derivatives of $\tilde u$ and $\tilde r$ with respect to $\theta$ are finite but nonvanishing at $\theta=0,\pi$. Hence, differentials of these coordinates are divergent on the symmetry axis. For $a=0$  coordinates ($\tilde r$, $\tilde\theta$, $\tilde\varphi$) coincide with the standard coordinates ($ r$, $\theta$, $\varphi$) of the Schwarzschild metric.
 
In  the  coordinates $u$, $\tilde r$, $\tilde\theta$, $\tilde \varphi$ the Kerr metric becomes more complicated than in the Boyer-Lindquist coordinates. In order to get a compact expression we  write it down using components $g_{tt},g_{t\varphi},g_{\varphi\varphi}$ of  (\ref{54})
\ba\label{65}
g=du\big[g_{tt}du-2\epsilon d\tilde r+2g_{t\varphi}d\tilde\varphi-2\epsilon\tilde\epsilon a(\cos^2{\tilde\theta}-\frac{2mr}{\rho^2}\cos^2{\theta})d\tau\ \big]\\\nonumber
+g_{\varphi\varphi}d\tilde\varphi^2-(r^2+\frac{2ma^2r}{\rho^2}\cos^2{\theta})\cos^2{\theta}d\tau^2- \frac{4\epsilon\tilde\epsilon ma^2r}{\rho^{2}}\sin^2{\theta}\cos^2{\theta}d\tilde \varphi d\tau\ ,
\ea
where
\be
d\tau=\frac{d\tilde\theta}{\cos{\tilde\theta}}\ .
\ee
Here, $r$ and $\theta$ should be considered as functions of $\tilde r$ and $\tilde \theta$ defined by relations (\ref{61}) and (\ref{64}). For $\epsilon=-1$, metric (\ref{65}) takes  form (\ref{1}). It has no singularity at $r=r_H$, so it can be continued through the past event horizon. In order to continue metric through the future horizon, we should use coordinates  corresponding to $\epsilon=1$.

Although coordinates $\tilde x^{\mu}$ are singular, there are strong indications that inequality (\ref{34}) should be satisfied. Conditions (i)-(iii) of section 4 are satisfied with $\beta=\Delta$. One can show that equations (\ref{13}) and (\ref{14}) are true, if we include in $R^{(2)}$ the Dirac delta terms at $\theta=0,\pi$. It follows from (\ref{65}) that functions $\omega^A$ are finite at $\theta=0,\pi$. Hence, inequality (\ref{19a}) is still true and one can generalize  the first part of Proposition 4.2 to the present case. We cannot apply its second part, since condition (\ref{34a}) is not fulfilled.  Instead,   we will  compute all ingredients of (\ref{34}) numerically  and compare the corresponding estimates  for $\tilde\epsilon=1$ and $\tilde\epsilon=-1$  with the Penrose inequality (\ref{35}).

Coordinate $\tilde\theta$ is convenient to describe the Kerr metric near the horizon but, in order to analyze metric when $u=const$ and $\tilde r\rightarrow\infty$, it is more convenient to use coordinate $\theta'$ given by
\be\n{66}
 \tan{(\frac{\tilde\theta}{2}-\frac{\pi}{4})}=e^{\gamma}\tan{(\frac{\theta'}{2}-\frac{\pi}{4})}\ ,
 \ee
 where
 \be
 \gamma=\tilde\epsilon a \int_{r_H}^{\infty}{\frac{dr}{\tilde \Delta}}\ .\n{67}
 \ee
 Function $n$ defined by (\ref{3a}) is invariant  under this change.
 Coordinate $\theta'$ tends to $\theta$ at infinity and it coincides with the  coordinate $\theta$ in \cite{fl}. Function $\sigma$ (see (\ref{3b})) can be obtained either directly from (\ref{65}) or, more easily, by  transforming  the determinant of the Kerr metric (\ref{54}) to the coordinates $u$, $\tilde r$, $\theta'$ and $\tilde \varphi$. It reads
 \be\n{69}
 \sigma=\tilde\Delta\sin{\theta}\frac{\cos{\theta}}{\cos{\theta'}}\ .
\ee 
 Expressing $\sigma$ in the new coordinates  and expanding it according to   (\ref{53a})  yields $\sigma_s=\sin{\theta'}$ and
  \be\n{68}
 n=4\int_{r_H}^{\infty}{(1-\frac{r^2}{\tilde \Delta})dr}+4\tilde\epsilon a\frac{\cos^2{\theta'}}{\sin{\theta'}-\coth{\gamma}}-2\tilde\epsilon a\frac{\cos{2\theta'}}{\sin{\theta'}}\ .
 \ee
  As expected, formula (\ref{6a}) implies $M_B=m$, hence $E(u)=m$. 
 Function $n$ is singular at $\theta'=0,\pi$, however it is integrable over the sphere. We calculated its mean value $\langle n\rangle$ using Mathematica 10. For both values  $\tilde\epsilon=\pm 1$, inequality (\ref{34}) is satisfied and $\langle n\rangle\geq 0$. Thus, inequality (\ref{34}) is stronger than the Penrose inequality (\ref{35}) which in the case of the Kerr metric is equivalent to $m\geq \frac 12r_H$. We obtain the following hierarchy of estimates
 \be\n{70}
 m\geq\frac 12r_H+\frac 18\langle n \rangle_{-1}\geq\frac 12r_H+\frac 18\langle n \rangle_{1}\geq\frac 12r_H\ ,
 \ee
 where subscripts $\pm 1$ correspond to $\tilde\epsilon=\pm 1$. All inequalities are saturated for $a=0$. Differences between estimates grow monotonically when $a/m\rightarrow 1$. For $a=m$ relation (\ref{70}) reads (approximately)
 \be\n{71}
 m>0,74m>0,63m>0,50m\ .
 \ee

 In the  case of the Kerr metric, we are not able to find  coordinates which satisfy condition  (\ref{34a}). A difference between expression (\ref{68})  for $\tilde\epsilon=-1$ and $\tilde\epsilon=1$ shows that an action of the BMS supertranslations (\ref{52}) is nontrivial and achieving condition (\ref{34a}) is not excluded. On the other hand,  inequality (\ref{70}) shows that this condition  may be not necessary for a  derivation of  the Penrose inequality from (\ref{34}).

\section{Summary}
We have been considering the Trautman-Bondi energy $E(u)$ of  null surfaces $u=const$ within the variant of the Bondi-Sachs formalism with the luminosity distance  replaced by the affine distance $r$. This energy  can be written as the limit (\ref{11a}) of a mean  value of null expansions of surfaces  of constant $u$ and $r$. If the dominant energy condition is satisfied, then the energy inequality (\ref{19a})  follows from one of the Einstein equations. It simplifies to (\ref{46b}) for a special affine coordinate $\hat r$. In section 4 we assume  existence of a regular event horizon $H$ such that appropriately scaled  null vector field tangent to $u=const$ extends to $H$. Then there exists   coordinate $r$ which is constant on $H$ (Proposition 4.1), exactly as in  the case of the Schwarzschild solution. Inequality (\ref{34}) follows in the limit of  (\ref{19a}) when the horizon is approached (Proposition 4.2).  Under assumption (\ref{34a}) about a long time 
behavior of the asymptotic 
data, inequality (\ref{34}) converts into the Penrose inequality (\ref{35}). In section 5, we present coordinates for the Kerr metric closely related to  the generalized Bondi-Sachs coordinates of Fletcher and Lun. They have divergent derivatives at $\theta=0,\pi$. Nevertheless, they can be used to illustrate the results of the preceding sections. Energy  inequality (\ref{34}) is satisfied and it is stronger than the Penrose inequality.


\begin{thebibliography}{99}
\bibitem{sy1}
Schoen R and Yau S-T 1979 On the proof of the positive mass conjecture in general relativity \textit{Commun. Math. Phys.} \textbf{65} 45-76
\bibitem{sy2}
Schoen R and Yau S-T 1981 Proof of  the  Positive Mass Theorem II \textit{Commun. Math. Phys.} \textbf{79} 231-260
\bibitem{m}
Mars M 2009 Present status of the Penrose inequality \textit{Class. Quantum Grav.} \textbf{26} 193001 
\bibitem{d}
Dain S 2014 Geometric inequalities for black holes \textit{Gen. Rel. Grav.} \textbf{46} 1715

\bibitem{p}
Penrose R 1973 Naked singularities \textit{Ann. N. Y. Acad. Sci.} \textbf{224} 125-134 

\bibitem{g}
Geroch R 1973 Energy Extraction \textit{Ann. N. Y. Acad. Sci.} \textbf{224 } 108-117 

\bibitem{haw}
 Hawking S W  1968 Gravitational radiation in an expanding universe \textit{J. Math. Phys.} \textbf{9 } 598-
604 


\bibitem{mm}
Malec E and \'{O} Murchadha N 1994 Trapped surfaces and the Penrose inequality in spherically
symmetric geometries \textit{Phys. Rev.} \textbf{D 49} 6931-6934 
\bibitem{h}
Hayward S A 1996 Gravitational energy in spherical symmetry \textit{Phys. Rev.} \textbf{D 53} 1938-1949

\bibitem{hi}
Huisken G and Ilmanen T 2001 The inverse mean curvature flow and the Riemannian Penrose
inequality \textit{J. Diff. Geom.} \textbf{59} 353-437 
\bibitem{b}
 Bray H L 2001 Proof of the Riemannian Penrose inequality using the positive mass theorem 
\textit{J. Diff. Geom.} \textbf{59 } 177-267 



\bibitem{bk}
Bray H L and Khuri M 2011 P.d.e.’s which imply the Penrose conjecture \textit{Asian J. Math.} \textbf{15}  557–610
 arXiv:0905.2622
\bibitem{in}
Israel W and  Nester J M 1981 Positivity of the Bondi gravitational mass \textit{Phys. Lett.} {\bf 85A}  259-260
\bibitem{lv}
Ludvigsen M and  Vickers J A G 1981 The positivity of the Bondi mass  \textit{J. Phys. A: Math.Gen.} {\bf 14} L389-391

\bibitem{lv1}
 Ludvigsen M and Vickers J A G 1982 A simple proof of the positivity of the Bondi mass \textit{J. Phys.
A: Math. Gen.} \textbf{15} L67-L70 

\bibitem{ber}
Bergqvist G 1997 On the Penrose inequality and the role of auxiliary spinor fields \textit{Class.
Quantum Grav.} \textbf{14 } 2577-2583 

\bibitem{cp}
Chru\'{s}ciel P T and Paetz T-T 2014 The mass of light-cones 	\textit{Class. Quantum Grav}. \textbf{31} 102001

\bibitem{t}
Tafel J 2014 On the energy of a null cone 
 \textit{Class. Quantum Grav.} {\bf  31} 235011
 
\bibitem{kt}
 Korbicz J and Tafel J 2004 Lagrangian and Hamiltonian for the Bondi-Sachs metrics \textit{Class. Quantum Grav}. \textbf{21} 3301-3308
 \bibitem{bbm}
 Bondi H,  van der Burg M G J and Metzner A W K 1962 Gravitational waves in general relativity VII: Waves from axi-symmetric isolated sources \textit{Proc. R. Soc. Lond}. \textbf{A269} 21-52

\bibitem{s}
Sachs R K 1962 Gravitational waves in general relativity VIII. Waves in asymptotically flat space-time \textit{Proc. R. Soc. Lond.} \textbf{A270}  103-126
 
 \bibitem{nu}
Newman E T and Unti T W J 1962 Behavior of asymptotically flat empty spaces
\textit{J. Math. Phys.} \textbf{3} 891–901

 


 \bibitem{fl}
 Fletcher S J and Lun A W C 2003 The Kerr spacetime in generalized Bondi–Sachs coordinates \textit{Class. Quantum Grav.} \textbf{20} 4153–4167



 

\bibitem{tr}
Trautman A 1958 Radiation and boundary conditions in the theory of gravitation  
\textit{Bull. Acad. Pol. Sci., S\'{e}rie sci. math., astr. et phys.} \textbf{VI}  407–412
\bibitem{bl}
 Barnich G and  Lambert P H 2012
 A note on the Newman-Unti group and the BMS charge algebra in terms of Newman-Penrose coefficients \textit{Adv. Math. Phys.} \textbf{2012}  197385

 \bibitem{t1}
 Tafel J 2000 Bondi mass in terms of the Penrose conformal factor \textit{Class. Quantum Grav. } \textbf{17} 4397-4408

 \bibitem{sz}
 Szabados L B 2004 Quasi-Local Energy-Momentum and Angular Momentum in GR: A Review Article
\textit{Living Rev. Relativity} \textbf{7} 4




\end{thebibliography}
\end{document}